\theoremstyle{theorem}
\newtheorem{theorem}{Theorem}
\def\th@definition{
\thm@headfont{\itshape} % Heading font is italic
\thm@notefont{} % Note is same as heading
{}% Regular text
}
\theoremstyle{definition}
\newtheorem{definition}{Definition}
\newcommand{\sol}{\textsc{Whistle}\xspace}
\newcommand{\esol}{\textsc{extended~Whistle}\xspace}
\newcommand{\solution}{\textbf{Whi}ch \textbf{S}ervice \textbf{T}o off\textbf{l}oad for remote \textbf{E}xecution\xspace}
\newcommand{\eg}{\textit{e.g.,~}}
\newcommand{\ie}{\textit{i.e.,~}}
\newcommand{\etal}{\textit{et al.}~}
\newcommand{\boubakr}{\color{black}}
\begin{document}
\bstctlcite{IEEEexample:BSTcontrol}

\title{Matching-based Service Offloading for Compute-less Driven IoT Networks}
% \title{On Compute-less Driven Network and QoS-aware Computation Offloading in IoT Applications}

\author{
	\IEEEauthorblockN{
		Boubakr Nour, \IEEEmembership{Member, IEEE}, and
		Soumaya Cherkaoui, \IEEEmembership{Senior Member, IEEE}
	}
	
	\thanks{
		B. Nour and S. Cherkaoui are with are with the INTERLAB Research Laboratory, Faculty of Engineering, Department of Electrical and Computer Science Engineering, Université de Sherbrooke, Sherbrooke (QC) J1K 2R1, Canada 
		(e-mails: boubakr.nour@usherbrooke.ca, soumaya.cherkaoui@usherbrooke.ca).}
}

\maketitle

\begin{abstract}
	With the advent of the Internet of Things (IoT) and 5G networks, edge computing is offering new opportunities for business model and use cases innovations. Service providers can now virtualize the cloud beyond the data center to meet the latency, data sovereignty, reliability, and interoperability requirements. Yet, many new applications (\eg augmented reality, virtual reality, artificial intelligence) are computation-intensive and delay-sensitivity. These applications are invoked heavily with similar inputs that could lead to the same output. Compute-less networks aim to implement a network with a minimum amount of computation and communication. This can be realized by offloading prevalent services to the edge and thus minimizing communication in the core network and eliminating redundant computations using the computation reuse concept. In this paper, we present matching-based services offloading schemes for compute-less IoT networks. We adopt the matching theory to match service offloading to the appropriate edge server(s). Specifically, we design, \sol, a vertical many-to-many offloading scheme that aims to offload the most invoked and highly reusable services to the appropriate edge servers. We further extend \sol to provide horizontal one-to-many computation reuse sharing among edge servers which leads to bouncing less computation back to the cloud. We evaluate the efficiency and effectiveness of \sol with a real-world dataset. The obtained findings show that \sol is able to accelerate the tasks completion time by 20\%, reduce the computation up to 77\%, and decrease the communication up to 71\%. Theoretical analyses also prove the stability of the designed schemes.
\end{abstract}

\begin{IEEEkeywords}
	edge computing, compute-less network, computation reuse, computation offloading
\end{IEEEkeywords}

\IEEEpeerreviewmaketitle

\section{Introduction}
\label{sec:introduction}
The raise number of connected Internet of Things (IoT) devices and the massive amount of generated data led to radical changes not only in the nature of applications' design but also in their requirements and needs~\cite{samuel2019making}. For instance, real-time applications demand response times in the sub of $100~ms$ or even the sub of $10~ms$ range~\cite{alnoman2019emerging}. With the massive volume of generated low-value time-series data and the complex data analysis process, transmitting these data to distant cloud servers might be costly over a shared core network and shatter the low/ultra-low latency needs, which in return fail to meet the desired quality of service (QoS) requirements~\cite{xiong2018extend}.

Edge computing~\cite{filali2020multi} has been introduced to overcome numerous issues, such as low/ultra-low latency, lack of high-speed connectivity, security, data sovereignty, reliability, and interoperability with legacy systems. Edge computing enables service providers to virtualize the cloud beyond their data center~\cite{laroui2021edge}. Indeed, computations and workloads created in the cloud can now be offloaded towards edge servers that are located near to users (essentially at the edge network, one hop from the application). Appropriate data analysis and computation can be performed at the edge network and hence meet the desired QoS requirements~\cite{ni2019toward}.
Cloud and edge computing are complementary in relation to analytic processing workloads and computation. Indeed, the edge and cloud servers need to maintain a communication channel to exchange statistics regarding the change in the workload, computation, and requests' distribution. These statistics will lead to an efficient offloading decision, where only worthiness services will be moved to edge servers~\cite{wei2018dynamic}.

With a deep view of these statistics (\eg for both communication and computation), we found that the same service can be invoked multiple times with identical, similar, or semi-similar input data. These data, even with the non-identical feature, lead to the same output~\cite{sanadhya2012asymmetric}. For example, object/person detection and recognition in a smart city, road signs detection and identification in smart vehicles, crossing people and obstacles detection in autonomous vehicles, handwriting detection in optical character recognition share the same characteristics (regardless of the application's context) where the same service will most likely receive multiple invocations with different input data that produce the same output (results)~\cite{he2016exploiting}. Conventional computation at the edge network blindly computes redundant tasks multiple times (\eg vehicles driving on the same road will redundantly detect the same road signs), which deficiently utilizes the edge resources and impacts the QoS.

\begin{figure*}[!t]
	\centering
	\includegraphics[width=\linewidth]{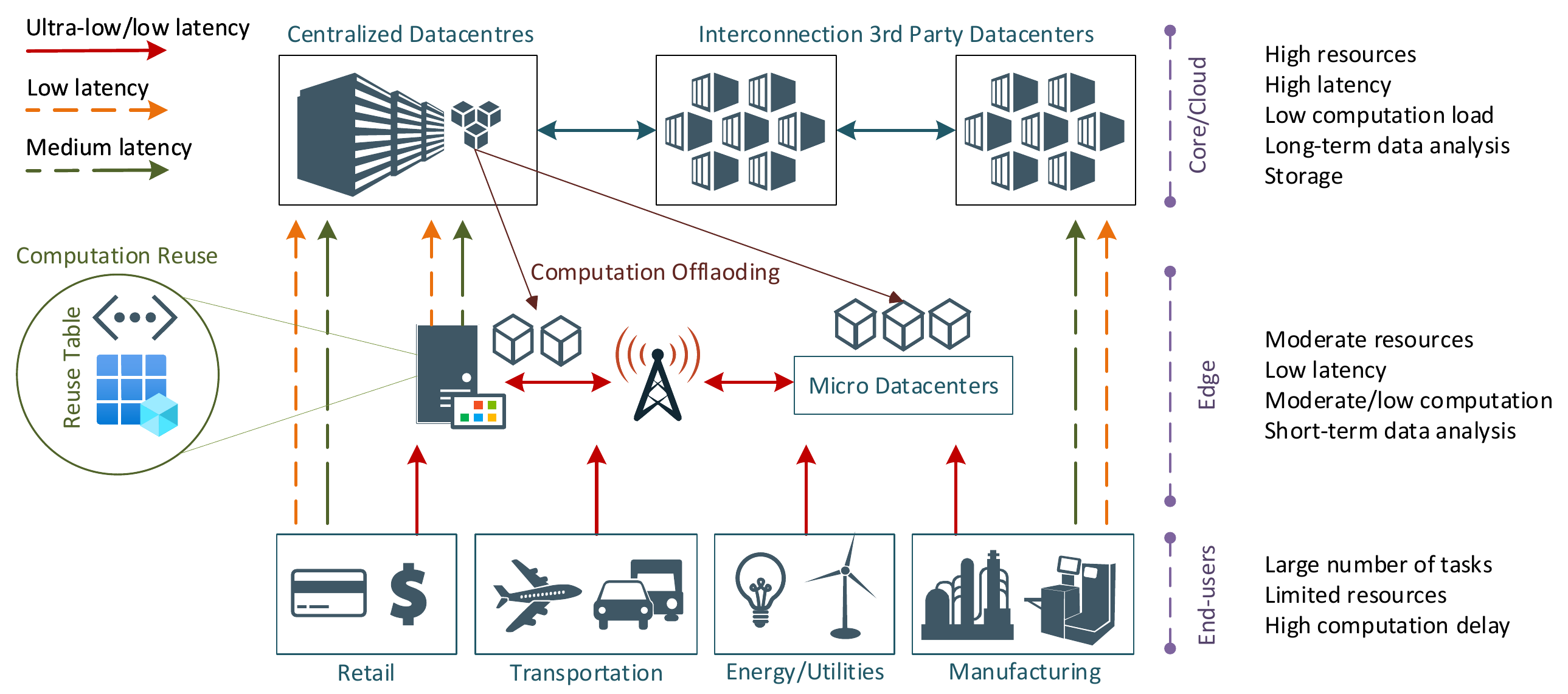}
	\caption{Reference network for computation offloading at the edge.}
	\label{fig:reference_network}
\end{figure*}

Compute-less networking~\cite{nour2020computeless}, an extended design on top of edge computing, tends to further optimize edge computing. It aims to design a network with a minimum of computation and communication costs~\cite{nour2021howfar}. In doing so, different communication and computation techniques need to take place, including computation offloading, in-network computing, and computation reuse~\cite{tang2020communication}. The first one attempts to offload the most used services from distant cloud servers to edge network~\cite{wang2019edge}, the second endeavors to provide computation at the network level~\cite{albalawi2019inca}, while the last one serves to store the previously computed tasks on a local data structure, known as Reuse Table, and use their results to satisfy the newly received similar tasks~\cite{nour2021whispering}. 
Indeed, by leveraging these mechanisms, a compute-less network will perform less computation and communication (\eg fewer data traversing the network, fast computation at the edge), which will improve the quality of service/experience, optimize the resource utilization, and enhance the overall network performance~\cite{nour2021federated, krol2019compute}.

Existing computation offloading schemes~\cite{mach2017mobile} aims to offload services from the cloud to the edge by targeting a specific objective (\eg delay, resources, budget). Based on the frequency of invocation and users' location, the same service can be offloaded to multiple edge servers. 
{\boubakr Various optimization techniques have been used to solve the computation offloading problems, such as matching theory, genetic algorithms, simulated annealing, greedy algorithms~\cite{lin2019computation}. Matching game, in particular, comes up with an efficient matching strategy to match multiple services with multiple servers to satisfy some constraints~\cite{abouaomar2018matching}.}
In this work, we design a compute-less driven IoT network by adopting matching theory. {\boubakr Compared with conventional computation offloading solutions, we further append the computation reuse feature to implement a QoS-aware offloading scheme. The combination of these techniques, at the offloading decision level, helps in improving the decision-making and the quality of service, which consequently enhances the overall network performance. Compared with other offloading schemes, the matching theory provides dynamic and scalable decision-making considering the nature of services, networks, and users' demands.}
The major contributions of this work are:

\begin{itemize}
	\item We investigate service offloading in compute-less networks. Instead of focusing only on the computation, we join the computation caching and reuse in the offloading formula.
	
	\item We formulate the service offloading problem as a matching theory problem. We design, \sol, a vertical service offloading scheme. The verticalness refers to the relationship between cloud and edge servers (cloud $\updownarrow$ edge) in deciding/providing computation. \sol is a many-to-many matching game scheme that tends to offload the most invoked services with high reusability gain from the cloud to a set of edge servers.
	
	\item We design, \esol, a horizontal computation offloading scheme between edge servers. The horizontalness refers to the cooperative relationship between edge servers (edge $\leftrightarrow$ edge) in providing costless computation. \esol is a one-to-one matching scheme that aims to satisfy the tasks for the non-offloaded services by forwarding them to the near edge server rather than bouncing them back to a distant cloud server.
	
	\item We provide an extensive simulation using Alibaba Cluster Dataset. The obtained results show the effectiveness and efficiency of the designed schemes in reducing both communication and computation as well as meeting the desired QoS requirements.
\end{itemize}

The remainder of the paper is organized as follows.
Section~\ref{sec:model} introduces the system model, network entities, as well as problem formulation.
Section~\ref{sec:offloading_scheme} presents the vertical service offloading scheme (\sol) between cloud and edge servers.
Similarly, Section~\ref{sec:sharing_scheme} presents the extended computation offloading scheme (\esol) that tends to share the computation reuse between multi-edge servers.
Section~\ref{sec:evaluation} presents the evaluation performance and discusses the obtained results.
In Section~\ref{sec:related_work}, we overview existing work that falls under the umbrella of compute-less networks and computation offloading.
Finally, we conclude the paper in Section~\ref{sec:conclusion}.

% -------------------------------------------
\section{System Model \& Problem Formulation}
\label{sec:model}

\subsection{Network Architecture}
In this work, we consider a cloud-edge network, as illustrated in Fig.~\ref{fig:reference_network}, where a service provider manages a distant cloud server $C$ and a set of edge servers distributed geographically $E, |E| = n$.
The cloud server, with immense computation resources, manages a set of services $S, |S| = m$. The service provider has the authority to offload its services to (multiple) edge servers based on the available resources and the frequency of invocations.
For readability purposes, the index on edge servers ($E$) is denoted by $e$ and the index on services ($S$) is denoted by $s$. In the rest of this model, notations are presented for a period of time, $\tau$, and are summarized in Table~\ref{tab:notation}.

\begin{table}[!t]
	\centering
	\makegapedcells
	\caption{Summary of the most used notation in this paper.}
	\label{tab:notation}
	\begin{tabular}{r l}
		\toprule
		\textit{Notation} &
		\textit{Description} \\
		\midrule
		$S$		& Set of services \\
		$E$		& Set of edge servers \\
		$f^c$ 	& Resources capacities at the cloud \\
		$b^c$ 	& Minimum bandwidth towards the cloud \\
		$f^e$ 	& Resources capacities at the edge $e$ \\
		$b^e$ 	& Minimum bandwidth towards the edge $e$ \\
		$S^e$ 	& Set of services offloaded to edge $e$ \\
		$T^s$	& List of tasks invoking the service $s$ \\
		$I_t$	& Task's input data \\
		$F_t$ 	& Task's execution complexity \\
		$D_t$ 	& Task's output data \\
		$\Gamma(t)$ & Task's communication cost \\
		$\chi(t)$   & Task's execution cost \\
		$\zeta_t$   & Task's completion cost \\
		$L$ 	    & Lookup cost in reuse table \\
		$\upsilon_e^s$ & Network delay of offloading service $s$ to edge $e$ \\
		$x_t^e$     & 1-0 Offloading variable:\\
		& $x_t^e = 1$: task computation is offloaded to edge $e$ \\
		$y_t$       & 1-0 Reuse variable:\\
		& $y_t = 1$: computation reuse is applied to execute task $t$ \\
		$r_t$       & 1-0 Full reuse variable:\\
		& $r_t = 1$: Task $t$ is executed via full reuse \\
		\bottomrule
	\end{tabular}
\end{table}

\vspace{0.2cm}
\textbf{Service Modeling.}
Services are initially offered by the cloud server $C$ and could be offloaded to different edge servers $\{e\} \in E$. A service $s \in S$ is defined by the following tuple:

$$ s = \langle~\delta_s, \sigma_s, \rho_s~\rangle, $$

\noindent where $\delta_s$ denotes the service granularity, $\sigma_s$ denotes the service potential reusability, and $\rho_s$ denotes the service punishment metric.

\begin{definition}[Service Granularity]
	It defines how much the received input data is similar. It is calculated, as shown in Eq.~(\ref{eq:service_granularity]}), based on the total number of received data ($I_{\mathsf{r}}$) and the number of distinct data ($I_{\mathsf{d}}$), for a period of time $\tau$.	
	
	\begin{equation} \label{eq:service_granularity]}
		\delta_s = \frac{1}{1 + e^{-\frac{I_{\mathsf{r}}}{I_{\mathsf{d}}}}}
	\end{equation}
\end{definition}

\begin{definition}[Service Potential Reusability]
	It refers to the rate that a service $s$ can benefit from processing similar input data (\eg how much reuse we can achieve). This metric, as shown in Eq.~(\ref{eq:service_potentioal_reusability}), is based on the service granularity.
	
	\begin{equation} \label{eq:service_potentioal_reusability}
		\sigma_s = \frac{1}{1 + e^{-\frac{1}{\delta_s}}}
	\end{equation}
\end{definition}

\begin{definition}[Service Punishment]
	It aims to diminish the chances of an already evicted service from being re-offloaded to the same edge server. This eviction mainly corresponds to a change in the received task's distribution of the service. Thus, re-offloading the same service may not help in achieving better performance. It is calculated, as shown in Eq.~(\ref{eq:service_punishement}), based on the service potential reusability.
	
	\begin{equation} \label{eq:service_punishement}
		\rho_s = 
		\begin{cases}
			& \sigma_s,    \qquad\text{if $s$ has not been evicted,} \\
			& 1 - \sigma_s,~\text{otherwise.} \\ 
		\end{cases}
	\end{equation}
\end{definition}

Let $S^e \subset S$ denote the list of services offloaded at the edge $e \in E$. A service $s \in S^e$ is defined by the following tuple:

$$ s = \langle~\delta_s, \varphi_s~\rangle, $$

\noindent where $\delta_s$ denotes the service granularity, and $\varphi_s$ denotes the gain of service offloading.

\begin{definition}[Offloading Gain]
	It indicates the reduced communication and computation that will be witnessed when the said service is offloaded. The gain is based on the average size of the received input data and the execution complexities, for a period of time $\tau$, and defined by the following tuple:
	
	$$ \varphi_s = \langle~\sigma_s.\text{avg}(\sum I_t),~~\sigma_s.\text{avg}(\sum f_t)~\rangle $$
\end{definition}

\vspace{0.2cm}
\textbf{Task Modeling.}
Due to their limitations in resources, IoT users send their tasks for remote execution, which we consider an $M/M/1$ queuing system following Little's Law.
Let $T^s$ denote the list of tasks invoking the service $s \in S$. A task $t \in T^s$ is defined with the following tuple:

$$ t = \langle~I_t, F_t, D_t~\rangle, $$

\noindent where
$I_t$ is the task's input data to execute the target service,
$F_t$ is the task's execution complexity and defines the required resources to execute the given task with the provided input data, and
$D_t$ is output data after a successful execution.

\vspace{0.2cm}
\textbf{Communication Modeling.}
Let $\Gamma(t)$ denote the task's communication cost. The cost is based on the task's input data ($I_t$) and the minimum bandwidth between end-user and cloud ($b^c$) or edge server ($b^e$). The communication cost is defined as shown in Eq.~(\ref{eq:communication_cost}).

\begin{equation} \label{eq:communication_cost}
	\Gamma(t) = x_t^e.\frac{I_t}{b^e} + (1-x_t^e).\frac{I_t}{b^c}, \quad \forall e \in E
\end{equation}

\noindent where $x_t^e$ denotes whether a task $t$ is sent to an edge server ($x_t^e = 1$) or the cloud ($x_t^e = 0$) for execution. We call $x_t^e$ the offloading decision variable.

\vspace{0.2cm}
\textbf{Computation Modeling.}
Let $\chi(t)$ denote the task's execution cost. $\chi(t)$ depends on the task's complexity ($F_t$) and the server's resources. Let $f^c$ and $f^e$ denote the resources at the cloud and the edge server, respectively. The execution cost is defined as shown in Eq.~(\ref{eq:computation_cost}).

\begin{equation}
	\label{eq:computation_cost}
	\chi(t) = x_t^e.\frac{F_t}{f^e} + (1-x_t^e).\frac{F_t}{f^c}, \quad \forall e \in E
\end{equation}

If a service $s$ is offloaded to an edge, its associated tasks will be executed at the assigned edge and will not be bounced back to the cloud server.

If computation reuse is applied, we differentiate two cases:
\begin{enumerate*}[(i)]
	\item {\em full computation reuse}: the stored output in the Reuse Table will be used to fully satisfy the received task, the execution cost is equal to the lookup cost to find a match in the Reuse Table; and
	\item {\em partial computation reuse}: the stored output in the Reuse Table will be used to partially satisfy the received task while the rest of the task will be executed, the execution cost is equal to the sum of the lookup process and reset of computation to fulfill the whole task's execution.
\end{enumerate*}
The cost is then defined as shown in Eq.~(\ref{eq:cost_reuse}).

\begin{equation} \label{eq:cost_reuse}
	\eta(t) = r_t.L + (1-r_t).(L + \chi(t'))		
\end{equation}

\noindent where $r_t$ denotes whether a task $t$ is satisfied via a full computation reuse ($r_t = 1$) or a partial computation reuse ($r_t = 0$), and $t' \subset t$ is the remained part of the task $t$ to be executed.

Finally, the overall task's completion cost ($\zeta_t$) is defined as shown in Eq.~(\ref{eq:cost_total}).

\begin{multline}
	\label{eq:cost_total}
	\zeta_t =
	\underbrace{(1 - x_t^e).(\frac{I_t}{b^c} + \frac{F_t}{f^c})}_\text{At cloud} +
	\underbrace{x_t^e(1 - y_t).(\frac{I_t}{b^e} + \frac{F_t}{f^e})}_\text{At edge without reuse} + \\
	\underbrace{x_t^e y_t.(\frac{I_t}{b^e} + r_t.L)}_\text{At edge with full reuse} + 
	\underbrace{x_t^e y_t.(\frac{I_t}{b^e} + (1-r_t).(L + \chi(t')))}_\text{At edge with partial reuse}
	\\ = \Gamma(t) + (1 - y_t).\chi(t) + y_t.\eta(t)
\end{multline}

\noindent where $y_t$ denotes if computation reuse ($y_t = 1$) or computation from scratch ($y_t = 0$) is applied at the edge server. 

\subsection{Problem Formulation}
For any time period, $\tau$, the service provider determines which services need to be offloaded to the which edge server for remote execution in order to minimize the overall task's completion cost. This results in maximizing the computation reuse gain and reducing the overall edge/network resource utilization based on an optimal service offloading and computation reuse strategy. Thus, we can formulate the service offloading problem as:

\begin{equation} \label{eq:obj}
	\underset{x, y, r}{\min}~~\sum_{t = 1}^{n} \sum_{e = 1}^{m} \zeta_t
\end{equation}

subject to,

\begin{subequations}
	\begin{equation} \label{eq:cons1}
		\sum_{t \in T^s} x_t^e F_t \leq f^e \qquad \forall e \in E,
	\end{equation}
	\begin{equation} \label{eq:cons2}
		\sum_{t \in T^s} x_t^e I_t \leq b^e \qquad \forall e \in E,
	\end{equation}
	\begin{equation} \label{eq:cons3}
		x_t^e \in \{0, 1\} \quad \forall t \in T, \forall e \in E,
	\end{equation}
	\begin{equation} \label{eq:cons4}
		y_t, r_t \in \{0, 1\} \qquad \forall t \in T,
	\end{equation}
	\begin{equation} \label{eq:cons5}
		y_t \geq x_t^e \quad \forall t \in T, \forall e \in E,
	\end{equation}
	\begin{equation} \label{eq:cons6}
		r_t \geq y_t^e \quad \forall t \in T.
	\end{equation}
\end{subequations}

The objective function (\ref{eq:obj}) tends to minimize the overall task completion cost, subject to a set of constraints: 
(\ref{eq:cons1}) the aggregate task complexity for a given edge server should not exceed the edge's resource capacity, 
(\ref{eq:cons2}) the overall data traversed the network should not exceed the available bandwidth,
(\ref{eq:cons3})-(\ref{eq:cons4}) the decision variables are binary, and
(\ref{eq:cons5})-(\ref{eq:cons6}) impose the variables' boundary.

\vspace{0.2cm}
\begin{theorem}
	The problem (\ref{eq:obj}) is NP-hard.
\end{theorem}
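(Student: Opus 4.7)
The plan is to prove NP-hardness of problem~(\ref{eq:obj}) by exhibiting a polynomial-time reduction from a classical NP-hard problem. A natural candidate is the 0-1 Knapsack Problem, because constraint~(\ref{eq:cons1}) already has the form of a knapsack capacity constraint ($\sum_t x_t^e F_t \leq f^e$). A more natural multi-bin variant would be the Multiple Knapsack Problem or the Generalized Assignment Problem, exploiting the multiplicity of edges. I would argue that even the single-edge, no-reuse restriction of our problem contains 0-1 Knapsack as a special case, which already suffices.

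The reduction I envision proceeds as follows. Given a Knapsack instance with items $i = 1,\ldots,n$ of weights $w_i$, values $v_i$, and capacity $W$, I would construct an instance of~(\ref{eq:obj}) with a single edge server $e$ of compute capacity $f^e = W$ and bandwidth $b^e$ chosen large enough that~(\ref{eq:cons2}) is trivially satisfied. For each item $i$, I would introduce one task $t_i$ with execution complexity $F_{t_i} = w_i$ and input size $I_{t_i}$ set to a negligible value. I would then tune the cloud parameters $b^c$ and $f^c$ so that the reduction in $\zeta_{t_i}$ obtained by offloading $t_i$ to the edge (rather than leaving it at the cloud) equals exactly $v_i$. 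The reuse variables are forced by~(\ref{eq:cons5})-(\ref{eq:cons6}) to $y_{t_i} = x_{t_i}^e$ and $r_{t_i} = 1$; by choosing the lookup cost $L = 0$, the reuse-related terms in~(\ref{eq:cost_total}) become zero and do not perturb the objective. Under this construction, minimizing $\sum_t \zeta_t$ becomes equivalent to maximizing $\sum_i v_i\,x_{t_i}^e$ subject to $\sum_i w_i\,x_{t_i}^e \leq W$, which is precisely the Knapsack decision. Since the reduction is polynomial in the encoding length of the Knapsack instance, NP-hardness of~(\ref{eq:obj}) follows.

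The main obstacle I anticipate lies in clean bookkeeping of the cost coefficients: equation~(\ref{eq:cost_total}) mixes communication and computation terms that interact multiplicatively with the offloading decision, so one must pick $b^c$, $f^c$, and $b^e$ carefully to isolate a single ``saving per offloaded item'' quantity and make it equal to $v_i$. A related subtlety is that constraint~(\ref{eq:cons5}) couples $y_t$ to $x_t^e$ so that $y_t$ cannot be arbitrarily zeroed-out whenever a task is offloaded; however, as noted, the choice $L = 0$ together with $r_{t_i} = 1$ neutralizes the reuse contribution to the objective, so the coupling does not affect the correspondence with Knapsack. No auxiliary features of the model are invoked beyond what is already defined, so the reduction is self-contained, and the NP-hardness claim is established.
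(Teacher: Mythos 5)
Your proposal takes the same core route as the paper's proof --- NP-hardness via 0-1 Knapsack, anchored on the edge capacity constraint (\ref{eq:cons1}) --- but you execute it in the logically correct direction and with considerably more care. The paper's proof is a two-sentence sketch asserting that problem (\ref{eq:obj}) ``can be represented as'' a 0-1 Knapsack problem, with services as items, the offloading gain $\varphi$ as the item weight, and the edge capacity as the knapsack capacity; read literally, this reduces the offloading problem \emph{to} Knapsack (which by itself proves nothing about hardness), and taking $\varphi$ as the weight does not match the actual capacity constraint (\ref{eq:cons1}), which is stated in terms of $F_t$. You instead reduce \emph{from} Knapsack by exhibiting it as a special case: a single edge server with $f^e = W$, one task per item with $F_{t_i} = w_i$, negligible input sizes, and cloud/edge rates tuned so that the per-item saving equals $v_i$. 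That is the argument the paper presumably intends, and your version is the one that actually establishes NP-hardness. One step you must tighten: constraint (\ref{eq:cons5}) reads $y_t \ge x_t^e$, so it does \emph{not} force $y_{t_i} = x_{t_i}^e$ as you claim --- it permits $y_t = 1$ for a task left at the cloud, and under the compact form of $\zeta_t$ at the end of (\ref{eq:cost_total}) this would let every non-offloaded task be ``reused'' at cost $L = 0$, erasing the cloud computation term $F_t/f^c$ and collapsing the value structure of your reduction. You should either appeal to the expanded (first) form of (\ref{eq:cost_total}), in which the cloud branch costs $I_t/b^c + F_t/f^c$ irrespective of $y_t$ and $r_t$, or impose the clearly intended restriction $y_t \le x_t^e$ in your constructed instance. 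With that repair, your reduction is sound and strictly more rigorous than the argument given in the paper.
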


\begin{proof}[Proof]
	The objective of cloud-edge service offloading is to minimize the overall execution time. The problem formulated in (\ref{eq:obj}) can be represented as a 0-1 Knapsack problem. The list of offload-able services is considered as items, offloading gain ($\varphi$) as item's weight, and the capacity of the edge server as the maximum weight capacity. The 0-1 Knapsack problem is an NP-complete, and therefore, the presented problem in (\ref{eq:obj}) is NP-hard.
\end{proof}

% -----------------------
\section{\sol: \solution}
\label{sec:offloading_scheme}
As its name stands for, \sol aims at answering the question of which services need to be offloaded for remote execution? To achieve this goal, \sol relies on matching theory. The fundamental objective of the matching theory is to find an optimal strategy to match players from two separate sets by formulating relationships that benefit all interested players while taking into account their particular preferences. In this work, we consider a vertical offloading scenario with two sets: edge servers $E$ and services $S$ hosted at the cloud. The objective is to match the most used services (in terms of invocation frequency and high computation reuse) with the appropriate edge servers. Communication, computation, and reuse have been taking into consideration when defining the preferences for each player.
In this section, we will start by providing details about the matching theory and how it relates to our offloading problem, and then we will dive into technical details of the proposed offloading schemes.

\subsection{Definitions}
In the following, we recall the basic definitions of matching game theory and relate them to our service offloading problem.

\begin{definition}
	Given two sets of players $S$ and $E$, a matching game is a pair $(S, E)$ on which two preference relations $\succ_s$ and $\succ_e$ are defined. These relations allow each player to express its preferences over the opposite players, \eg which service $s \in S$ prefers to be offloaded on which edge server $e \in E$, and vice versa.
\end{definition}

A matching game produces a matching function $\mu$, which is defined as follows.

\begin{definition}
	Given two sets of players $S$ and $E$, the function $\mu: S \times E \rightarrow S \times E$ is a matching function if conditions (\ref{eq:matching_fun_1}) and (\ref{eq:matching_fun_2}) are verified.
	
	\begin{subequations}
		\begin{align}
			\mu(\mu(w)) = w,~~\forall w \in S \times E, \label{eq:matching_fun_1}\\
			\mu(s) \in E, \mu(e) \in S,~~\forall s \in S, \forall e \in E. \label{eq:matching_fun_2}
		\end{align}
	\end{subequations}
	
\end{definition}

The matching function $\mu(\cdot)$ exploits the preference relations $\succ$ to determine the relation for a given player on one set to a given player on the opposite set. For each element on one set, the preference relation reflects the level of satisfaction in being matched with each element on the other set, and vice versa.

\begin{definition}
	A utility function is utilized to quantify the preferences between elements belonging to two different sets.
	Let $\Theta_s(\cdot)$ and $\Phi_e(\cdot)$ denote the utility functions of service $s$ and edge server $e$, respectively. If $\Theta_s(e_1) > \Theta_s(e_2)$, then the service $s$ prefers the edge server $e_1$ to the edge server $e_2$ on where it can be offloaded. Such situation is expressed by the preference relation $e_1 \succ_s e_2$. Similarly, $s_1 \succ_e s_2$ means that the edge server $e$ prefers the service $s_1$ to $s_2$ to be offloaded on it, hence the associated preference relation is $\Phi_e(s_1) > \Phi_e(s_2)$.
\end{definition}

\subsection{\sol: Service Offloading Scheme}
A vertical computation offloading aims at moving services from the cloud to edge servers. Fig.~\ref{fig:offloading_scheme} depicts the overall principles of the proposed offloading scheme, namely \sol. \sol tends to offload multiple services to multiple edge servers, which can be formed as a many-to-many matching. Hence, it builds a communication channel between the cloud server that hosts services and a set of edge servers ready to host the offloaded services. The channel helps in exchanging statistical information that is used for offloading decision-making.
In order to minimize the task's completion time, \sol requires a matching between services and available edge servers. Thus, generating preference lists is necessary before proceeding with optimal offloading decisions.

\begin{figure}[!b]
	\centering
	\includegraphics[width=.9\linewidth]{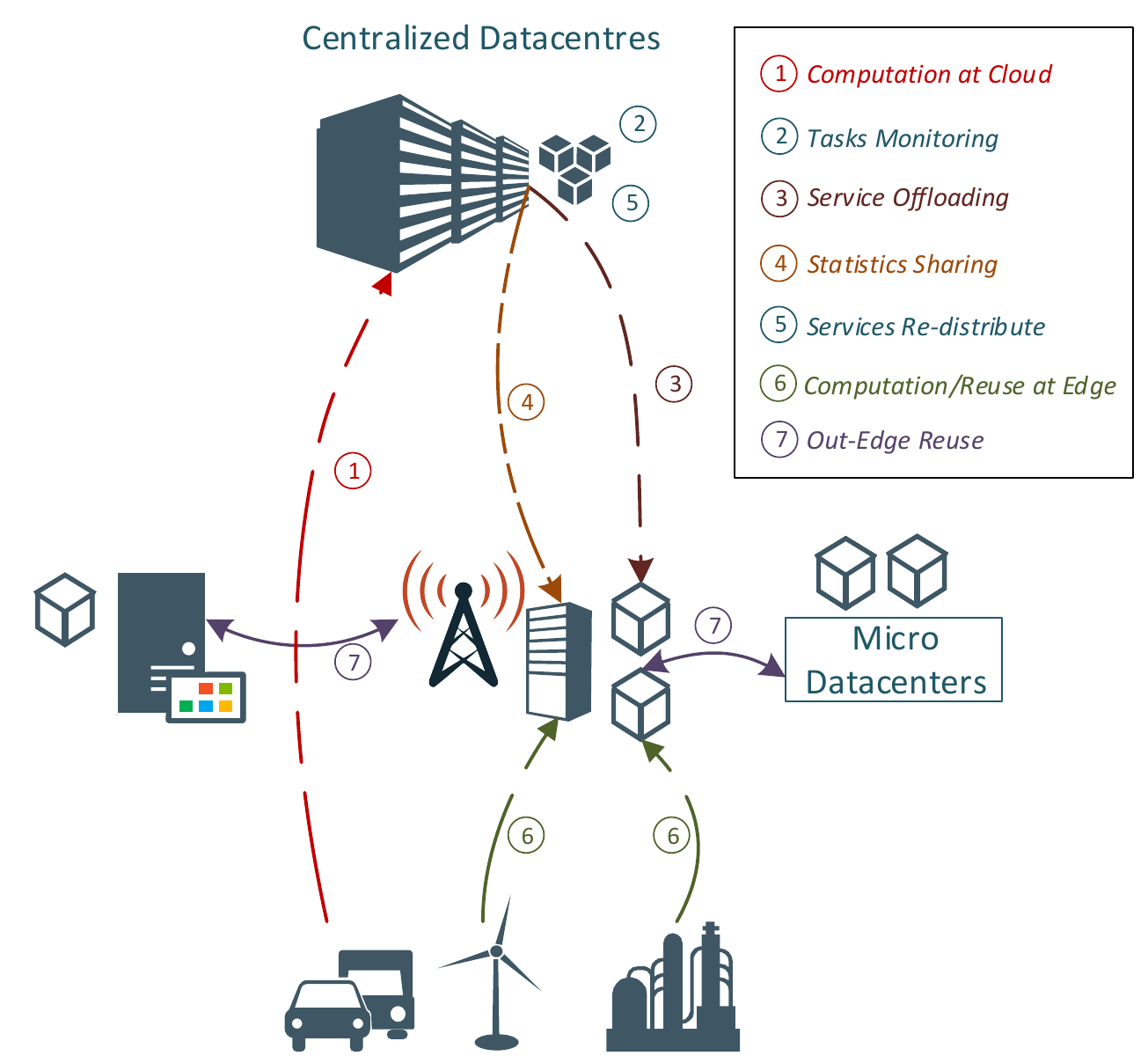}
	\caption{Overview of \sol service offloading scheme.}
	\label{fig:offloading_scheme}
\end{figure}

\vspace{0.2cm}
\subsubsection{Preference Lists}
Each service at the cloud level seeks to be offloaded to at least one edge server to minimize the communication and computation costs. Similarly, each edge server attempts to host as many services to satisfy the large number of received tasks to execute. Each player manages to satisfy some quality of service requirements related to communication, computation, and resources.

\vspace{0.2cm}
\textit{Service's Preference List.}
For a given service $s \in S$, the utility function $\Theta_s(\cdot)$ is evaluated by considering the communication cost, the service potential reusability, and the service punishment, and is given as shown in Eq.~(\ref{eq:service_preference_list}).

\begin{equation} \label{eq:service_preference_list} % TODO
	\Theta_s(e) = \frac{1}{\Gamma(s) + \sigma_s - \rho_s}, \quad \forall e \in E.
\end{equation}

\vspace{0.2cm}
\textit{Edge's Preference List.}
For a given edge server $e \in E$, the utility function $\Phi_e(\cdot)$ is evaluated by considering the computation cost, the reuse cost, and the offloading gain, and is given as shown in Eq.~(\ref{eq:edge_preference_list}).

\begin{equation} \label{eq:edge_preference_list}
	\Phi_e(s) = \frac{1}{\chi(s) + \eta(s) + \varphi_s}, \quad \forall s \in S.
\end{equation}

\vspace{0.2cm}
\subsubsection{\sol Scheme}
In order to ensure optimal offloading results, \sol tends to meet the preference lists of both services and edge servers with the best possible match. The offloading is achieved by selecting the convenient services to the adequate edge server(s), while the edge server hosts the relevant service(s). In doing so, \sol is performed in two main phases and two-matching stage, as illustrated in Algorithm~\ref{alg:offlaoding_alg}, and detailed as follows:

\begin{algorithm}[!t]
	\For{(each period of time $\tau$)}{
		Monitor received tasks\;
		Calculate service granularity ($\delta_s$)\;
		Calculate service potential reusability ($\sigma_s$);\
	}
	
	\For{(each service $s \in S$)}{
		Construct the preference list $\Theta_s(\cdot)$ (Eq.~\ref{eq:service_preference_list})\;
	}
	
	\For{(each edge server $e \in E$)}{
		Construct the preference list $\Phi_e(\cdot)$ (Eq.~\ref{eq:edge_preference_list})\;
	}
	
	\For{(each requested edge server $s$)}{
		\uIf{(capacity is not full)}{
			Offload the proposed service to edge server $e$\;
		}\Else{
			Reject the proposed service with \textit{DAA}\;
		}
	}
	
	\Do{(all services are not offloaded)}{
		\For{(each non-offloaded service $s$)}{
			Update $\Theta_s(\cdot)$ (Eq.~\ref{eq:service_preference_list})\;
			Propose the prefered edge server\;
		}
		
		\For{(each requested edge server $e$)}{
			Accept the preferred service among candidates\;
			Reject other services\;
		}
	}
	
	\caption{\sol offloading scheme.}
	\label{alg:offlaoding_alg}
\end{algorithm}

\vspace{0.2cm}
\textbf{\textit{Phase 1.} Jump-Starting Phase:}
This phase is executed before any offloading decision (Algorithm~\ref{alg:offlaoding_alg}, Lines 1-5), where all tasks are forwarded directly to the cloud server for remote execution. At this phase, the cloud server has no prior information about the distribution of tasks and the utility of each service. Thus, it starts by 
\begin{enumerate*}[(i)]
	\item monitoring all the received tasks; and 
	\item calculating, for each service, the service granularity ($\delta_s$) and the service potential reusability ($\sigma_s$).
\end{enumerate*}

The end of this phase is marked by deciding what service(s) could be offloaded to which edge server(s). This decision is based on the collected and calculated information, and proceeded on two matching stages:

\vspace{0.2cm}
{\em Stage 1.}
The first matching stage begins after discovering services and edge servers that are interested in the offloading process. In this phase, both preference lists of services and edge servers are constructed according to Eqs.~\ref{eq:service_preference_list} and~\ref{eq:edge_preference_list}, respectively. The process seeks to develop context awareness of the service computation cost and reuse gain towards each conceivable edge server, taking communication costs and available resources into account (Algorithm~\ref{alg:offlaoding_alg}, Lines 6-11).

To efficiently exploit the available resources, we have extended the Deferred Acceptance Algorithm (DAA)~\cite{roth2008deferred} in order to deduce a preliminary matching (Algorithm~\ref{alg:offlaoding_alg}, Lines 12-18). Each service $s$ picks the most preferred edge server $e$ based on $\succ_s$. Similarly, as long as the edge server $e$ does not reach the maximum capacity $f^e$, it accepts any service offloading request. Supposing the maximum capacity $f^e$ is reached, and the said edge server receives a new offloading request for a service that is favored above at least one of the previously accepted requests. The edge server $e$, in this case, accepts the received service and rejects the least preferred service on its list based on $\succ_s$. Finally, if a service experiences a rejection, it does not eliminate the edge server from its preference list, but it jumps for the next preferred edge server on its list.

\vspace{0.2cm}
{\em Stage 2.}
The second stage of the \sol scheme (Algorithm~\ref{alg:offlaoding_alg}, Lines 19-28) tends to allocate services to the best-preferred edge server. For each non-offloaded service $s$, the cloud server evaluates $\Theta_s(e)$ for each $e \in E$. Similarly, each edge server $e$ evaluates $\Phi_e(s)$ for each service $s$ that is not already offloaded. Then, for each non-offloaded service $s$, the cloud server submits the offloading request to the top-ranked edge server according to $\succ_s$. When the edge server $e$ receives a service offloading request, it selects the preferred ones among those submitted. At the end of the matching, appropriate services are offloaded to the adequate edge servers, while edge servers host the relevant services.

\vspace{0.2cm}
\textbf{\textit{Phase 2.} Steady-State Phase:}
Since the frequency of service invocations may change over time (\eg change in the distribution of users, the popularity of service, mobility of users, etc.), the service provider needs to redistribute its offloaded services based on the new statistics. Thus, each edge server periodically shares the locally collected statistics with the centralized cloud server. The latter recalculates service granularity ($\delta_s$), service potential reusability ($\sigma_s$), and the service punishment ($\rho_s$), and redistributes services accordingly. Compared with the jump-starting phase, the service punishment metric is included in the procedure to measure how worth is re-offloading a service after the eviction. This will result in better preference list construction and hence better services' redistribution.

\subsection{Performance Analysis}
In the following, we analyze the performance of the proposed offloading scheme (Algorithm~\ref{alg:offlaoding_alg}).

\begin{theorem}
	The proposed algorithm achieves the stability of service offloading decision.
\end{theorem}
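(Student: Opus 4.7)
The plan is to establish stability in the classical matching-theoretic sense: the matching $\mu$ produced by Algorithm~\ref{alg:offlaoding_alg} admits no blocking pair. Concretely, I would first formalize the notion of a blocking pair in our many-to-many setting: a pair $(s, e)$ with $e \notin \mu(s)$ and $s \notin \mu(e)$ is blocking if either (i) $s$ prefers $e$ over some $e' \in \mu(s)$ according to $\succ_s$ (or $\mu(s)$ has spare demand), and (ii) $e$ prefers $s$ over some $s' \in \mu(e)$ according to $\succ_e$ (or $e$ still has residual capacity $f^e$). Stability is then the non-existence of such a pair; my goal would be to show that whenever $\mu$ is the output of \sol, no $(s, e)$ satisfies both (i) and (ii) simultaneously.

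I would proceed by contradiction. Suppose $(s, e)$ is a blocking pair for $\mu$. Since \sol extends the Deferred Acceptance procedure, the iterative structure in Lines~19--28 guarantees that every non-offloaded service $s$ keeps traversing its preference list $\Theta_s(\cdot)$ in decreasing order and repeatedly proposes to its currently most preferred feasible edge. Consequently, if $s$ strictly prefers $e$ to all members of $\mu(s)$, then at some iteration $s$ must have proposed to $e$ (otherwise $s$ would not have descended to any less-preferred edge in $\mu(s)$). Two cases then arise: either $e$ rejected $s$ at that moment, or $e$ accepted $s$ and subsequently evicted it when a more preferred request arrived. In the first case, by the acceptance rule, $e$ must have already been saturated (i.e.\ at capacity $f^e$) with services all ranked above $s$ in $\Phi_e(\cdot)$. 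In the second case, the DAA-style eviction step ensures $s$ was displaced only by some $s'$ with $s' \succ_e s$.

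The key monotonicity argument is that, under the DAA extension we use, the set of services held by each edge $e$ only improves with respect to $\succ_e$ as the algorithm progresses: whenever $e$ swaps out a service, it does so for a strictly more preferred one. Therefore every service remaining in $\mu(e)$ at termination is at least as preferred as the least preferred service $e$ ever held; in particular every $s' \in \mu(e)$ satisfies $s' \succ_e s$. This directly contradicts condition~(ii) of the assumed blocking pair. A symmetric argument rules out the subcase where $e$ has residual capacity at termination, because in that case $e$ would have accepted $s$'s proposal outright (Lines~13--14), contradicting $e \notin \mu(s)$.

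The main obstacle I anticipate is rigorously handling the many-to-many nature of the game, together with the service punishment term $\rho_s$ that updates the utility $\Theta_s(\cdot)$ between iterations (Line~21). Unlike textbook Gale--Shapley, the preference scores here are recomputed across rounds, so one must argue that within a single invocation of Phase~1 (or Phase~2) the preferences used by the DAA loop are consistent, and that any evolution of $\rho_s$ takes effect only at the boundary between phases, after which a fresh stable matching is recomputed. Once this invariant is made explicit, the monotonicity argument above transfers cleanly and yields the absence of blocking pairs, establishing the stability of the offloading decision produced by \sol.
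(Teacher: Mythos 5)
Your argument is sound in outline but it proves a genuinely different statement than the paper does, because you and the paper adopt different notions of stability for the many-to-many game. You work with the classical \emph{pairwise} notion: a blocking pair $(s,e)$ with $e \notin \mu(s)$ where $s$ prefers $e$ to some current partner and $e$ prefers $s$ to some current guest (or has spare capacity), and you rule it out via the standard Deferred Acceptance monotonicity invariant --- each edge's held set only improves with respect to $\Phi_e(\cdot)$, so any $s$ that was ever rejected or evicted by $e$ is dominated by everything $e$ retains at termination. The paper instead invokes the \emph{two-sided exchange-stability} notion of Bodine-Baron et al.: no pair of already-matched services $(s_1,s_2)$ can profitably swap their assigned edges with all four parties weakly gaining and one strictly gaining; it then argues by contradiction that the utilities of services placed in Stage~1 cannot change under any swap, and that for Stage~2 placements the relevant quantities at the hypothetical deviation time equal those at allocation time, so no strict improvement exists. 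Neither notion implies the other in general, and the paper's choice is deliberate: with peer effects (here, an edge's attractiveness to a service depends on what else is co-located, via capacity and the reuse/punishment terms), pairwise-stable matchings need not exist, which is precisely why the exchange-stability framework is used. What your route buys is a cleaner, textbook-style argument with an explicit algorithmic invariant; what the paper's route buys is a stability concept that is actually guaranteed to be attainable in this peer-effect setting. One further caution: your monotonicity invariant requires the preference orders $\succ_s$ and $\succ_e$ to be fixed during a single run of the proposal loop, yet Algorithm~\ref{alg:offlaoding_alg} recomputes $\Theta_s(\cdot)$ inside the do-while (and $\rho_s$ flips after an eviction). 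You correctly flag this, but it is not a detail that ``transfers cleanly'' --- if a rejected service's utility for an edge can rise between rounds, the claim that it never needs to re-propose to an edge that previously rejected it fails, and with it the no-blocking-pair conclusion. You would need to prove (or impose) that within one phase the updates leave each service's ranking over edges unchanged, which the algorithm as written does not obviously guarantee.
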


\begin{proof}[Proof]
	Since service invocation and preferences can vary during the service's lifetime, the study of the outcome matching stability is not a trivial dilemma. A service $s$ may prefer to be offloaded to a specific edge server $e_1 \succ_s e_2$, and abruptly it does not $e_1 \nsucc_s e_2$. The sudden and unexpected change leads to triggering an exchange request, thereby causing a potential unstable situation.
	
	According to~\cite{bodine2011peer}, a many-to-many matching game is stable if it ends up being a two-sided exchange-stable matching.
	
	\begin{definition}[Two-sided exchange-stable matching]
		a given outcome matching game $\mu$ is a two-sided exchange-stable matching if there does not exist a pair of services $(s_1, s_2)$ such that:
		
		\begin{subequations}
			\begin{equation} \label{eq:2es_1}
				\Theta_{s_1}(\mu(s_2)) \geq \Theta_{s_1}(\mu(s_1)),
			\end{equation}
			\begin{equation} \label{eq:2es_2}
				\Theta_{s_2}(\mu({s_1})) \geq \Theta_{s_2}(\mu({s_2})),
			\end{equation}
			\begin{equation} \label{eq:2es_3}
				\Phi_\mu({s_1})({s_2}) \geq \Phi_\mu({s_1})({s_1}),
			\end{equation}
			\begin{equation} \label{eq:2es_4}
				\Phi_\mu({s_2})({s_1}) \geq \Phi_\mu({s_2})({s_2}),
			\end{equation}
			\begin{equation} \label{eq:2es_5}
				\exists z \in \{s_1, s_2, \mu(s_1), \mu(s_2)\}.
			\end{equation}
		\end{subequations}
	\end{definition}
	
	\noindent so that any of the above conditions (\ref{eq:2es_1})-(\ref{eq:2es_5}) is strictly verified.
	
	According to this definition, two services can change their offloading destination only if both gain a benefit from the exchange. Similarly, if two edge servers want to exchange the offloading request of two services, the associated services have to get an advantage on that. Indeed, an exchange is approved only if both involved players (\ie services and edge servers) requesting service offloading, strictly improves their overall reuse gain and reduces their cost.
	
	Let $S_1$ denote the set of services to be offloaded at the end of the first stage of jump-starting phase. It can be clearly noted that $|S_1| > 0$ and is bounded by the capacity of the edge server. We can say that for an offloaded service $s \in S_1$, the associated edge servers cannot change their service offloading decision (\ie all services in $S_1$ are stably offloaded to the appropriate edge servers). This is because $\Theta_s(·)$ and $ \Phi_e(·)$, defined in Eqs.~(\ref{eq:service_preference_list}) and (\ref{eq:edge_preference_list}), respectively, are unable to change since all of the terms they rely on stay constant regardless of any possible exchange.
	In addition, for each service $s$ that still needs to be offloaded at the end of the first stage of jump-starting phase, we have a stable matching by providing a reductio ad absurdum.
	
	In the following, we assume that at a time instant $\tau^*$ of stage 2 completion where all services have been offloaded to the associated edge servers, there exists a services pair $(s_1, s_2)$ such that conditions (\ref{eq:2es_1})-(\ref{eq:2es_5}) are verified.
	
	Focusing on condition (\ref{eq:2es_1}), we must have:
	\begin{equation} \label{eq:theorem1-1}
		\Theta_{s_1}(e_2) \geq \Theta_{s_1}(e_1).
	\end{equation}
	
	Supposing that $s_1$ and $s_2$, at the allocation time $\tau$ with $\tau < \tau^*$, are offloaded to edge server $e_1$ and $e_2$, respectively, we must have $\Theta_{s_1}({e_1}) \geq \Theta_{s_1}(e_2)$, hence:
	
	\begin{equation} \label{eq:theorem1-2}
		\frac{1}{\tau + \upsilon_{e_1}^{s_1}} \geq \frac{1}{\tau + \upsilon_{e_2}^{s_2}}.
	\end{equation}
	where $\upsilon_e^s$ is the experienced network delay of offloading service $s$ to edge $e$.
	
	As a consequence, we must have the allocation time for $s_1$ during $\tau$ is equal to the allocation time during $\tau^*$, otherwise (\ref{eq:theorem1-1}) would be impossible. This is due to the second step of the jump-starting phase, where it does not take out any previously accepted service offloading request and by consequence, the expected waiting time on a given edge server cannot be reduced. Similarly, we can undertake the same analysis on conditions (\ref{eq:2es_2})–(\ref{eq:2es_4}). We will have at the end the allocation time for $s_1$ during $\tau$ is equal to the allocation time during $\tau^*$. As a consequence, condition (\ref{eq:2es_5}) cannot be verified. This results in contradicting our initial assumption, and hence we can conclude that the proposed algorithm is a two-sided exchange-stable matching.
\end{proof}

% --------------------------------------------
\section{\esol: Out-Edge Reuse Sharing Scheme}
\label{sec:sharing_scheme}
Since edge servers are limited in resources compared with boundless cloud servers, it is not feasible to offload all services from the cloud to the edge. Multiple (near) edge servers may host different services according to their preference lists. It is likely possible that an edge server receives tasks for execution in which their associated services have not been offloaded to the said server. Bouncing back these tasks to the cloud for remote execution is the only solution, which might fail to address the latency requirements. Indeed, redirecting the task's computation to a near edge server that has the requested service is better than offloading the task to the cloud.
In the following, we extend \sol to perform a horizontal computation offloading between near edge servers instead of arranging a vertical offloading with the cloud. \esol tends to match one service to multiple near edge servers, which can be formed at one-to-many matching. As long as the computation reuse is concerned, the computation reuse will be shared outside the local edge server. That being the case, the preference lists need to be adjusted accordingly considering the service reusability.

\subsection{Preference Lists}
For each received task with a non-offloaded service, the edge server tries to find the near edge server that has the requested service and is able to execute the task with the minimum cost compared to offloading the task to a distant cloud server. Similarly, each neighbor edge server tries to select which tasks can execute with the available resources.

\vspace{0.2cm}
\textit{Service's Preference List.}
For a given service $s \in S$, the utility function $\Theta_s'(\cdot)$ is evaluated by considering the communication cost and the service potential reusability, and is given as shown in Eq.~(\ref{eq:service_preference_list2}):

\begin{equation} \label{eq:service_preference_list2}
	\Theta_s'(i) = \frac{1}{\Gamma(s) + \sigma_s}, \quad \forall e \in E'.
\end{equation}
where $E'$ defines the list of neighbor edge servers located in one-hop.

\vspace{0.2cm}
\textit{Edge's Preference List.}
Similarly, for a given edge server $e \in E$, the utility function $\Phi_e'(\cdot)$ is evaluated by considering the computation cost, the reuse cost, and the offloading gain, and is given as shown in Eq.~(\ref{eq:edge_preference_list2}):

\begin{equation} \label{eq:edge_preference_list2}
	\Phi_e'(s) = \frac{1}{\chi(s) + \eta(s) + \varphi_s}, \quad \forall s \in S'. 
\end{equation}
where $S'$ defines the list of offloaded services at the said edge server.

\begin{algorithm}[!t]
	\For{(each service $s \in S'$)}{
		Construct the preference list $\Theta_s(\cdot)$ according to Eq.~\ref{eq:service_preference_list2}\;
	}
	
	\For{(each edge server $e \in E'$)}{
		Construct the preference list $\Phi_e(\cdot)$ according to Eq.~\ref{eq:edge_preference_list2}\;
	}
	
	\For{(each requested task)}{
		off $\leftarrow$ False\;
		\For{(each available edge server $e$)}{
			\uIf{capacity is not full}{
				Offload the received task to edge server $e$\;
				off $\leftarrow$ True\;
			}\Else{
				Reject the proposed edge with DAA\;
			}
		}
		\If{(off is False)}{
			Offload the task to cloud\;
		}
	}
	
	\Do{(all tasks are non-offloaded)}{
		\For{(each non-offloaded service $s$)}{
			Update $\Theta_s'(\cdot)$\;
			Propose the prefered edge server\;
		}
		
		\For{\textbf{each} (requested edge server $e$)}{
			Accept the preferred service among candidates\;
			Reject other services\;
		}
	}
	
	\caption{\esol scheme.}
	\label{alg:sharing_alg}
\end{algorithm}

\subsection{\esol Scheme}
\esol scheme (shown in Algorithm~\ref{alg:sharing_alg}) starts by constructing the preference lists for both services and neighbor edge servers (lines 1-6). An edge server after receiving a task non-associated with any local hosted service, try to find a better match with the available list of near edge servers.

The two-matching stage is conducted for all non-offloaded services at the current edge server yet available at near-edge servers. The aim is to find the best edge server that can execute the task with less computation, mainly by using computation reuse instead of computation from scratch. Hence, the current edge selects the most suitable near edge server to execute the received task based on the preference list. If a preferable edge is running out of resources, the DAA algorithm jumps to the next desirable edge server. The process is repeated until matching with an edge server that can execute the task. The best matching case is achieved with full computation reuse while the least matching case is done with computation from scratch. If no match is found, the task is forwarded to the cloud for remote execution.

\subsection{Performance Analysis}
Now, we analyze the performance of \esol scheme, a one-to-many matching strategy (Algorithm~\ref{alg:sharing_alg}).

\begin{theorem}
	The proposed algorithm achieves a stable matching.
\end{theorem}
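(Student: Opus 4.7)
The plan is to follow the same blueprint as the proof of the stability of \sol, but adapt it to the one-to-many setting of \esol. The natural stability notion here is pairwise (blocking-pair) stability for one-to-many matching: no unmatched service--edge pair $(s,e)$ should simultaneously prefer each other to (some element of) their current assignment under $\Theta_s'$ and $\Phi_e'$. Because \esol, like \sol, is built on top of an extended Deferred Acceptance loop, the standard DAA stability argument transfers almost verbatim; the only care needed is in handling (i) the fact that the "cloud fallback" in Algorithm~\ref{alg:sharing_alg} (lines~15--16) plays the role of the outside option, and (ii) the many-to-one capacity on the edge side.

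First, I would formally restate the stability target: an outcome $\mu$ of Algorithm~\ref{alg:sharing_alg} is stable if there is no blocking pair $(s,e)$ with $e \succ_s' \mu(s)$ and $s \succ_e' s'$ for some $s' \in \mu(e)$ (or $e$ has free capacity). The cloud is treated as the least-preferred option for every $s$, so "$\mu(s)=\text{cloud}$" is dominated by any edge $e \in E'$ that would accept $s$. I would then partition the tasks produced by Algorithm~\ref{alg:sharing_alg} into those matched during the jump-starting loop (lines~7--17) and those matched in the second stage (lines~18--24), exactly as in the proof of Theorem~2.

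Second, I would argue stability by reductio ad absurdum at the completion time $\tau^*$ of stage~2. Suppose a blocking pair $(s,e)$ exists. If $\mu(s)=\text{cloud}$, then $s$ was considered at line~8 for every $e' \in E'$ and rejected: either because all those edges had no remaining capacity, or because they preferred their current candidates under $\Phi_{e'}'$. In either case $e$ cannot strictly prefer $s$ to some $s'\in\mu(e)$, because by the DAA monotonicity property the set of services currently held by $e$ is improving with respect to $\succ_e'$ throughout the loop (an incoming proposal is accepted only if it displaces a strictly less preferred one, exactly as argued for \sol). If instead $\mu(s)=e''\neq e$, then $s$ must have proposed to $e$ before $e''$ (since $e\succ_s' e''$), and was rejected at that point; the same monotonicity argument shows $e$ cannot now prefer $s$ to its current occupants. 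Both subcases contradict the existence of a blocking pair.

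Third, I would invoke the argument already used for \sol to rule out an exchange of assignments between two services (two-sided exchange stability): the utilities $\Theta_s'$ and $\Phi_e'$ in Eqs.~(\ref{eq:service_preference_list2})--(\ref{eq:edge_preference_list2}) depend only on $\Gamma(s)$, $\chi(s)$, $\eta(s)$, $\sigma_s$ and $\varphi_s$, which are invariant under any hypothetical reassignment of the current $\tau^*$-step, so no exchange can strictly improve the allocation time of both sides. The main obstacle I anticipate is the bookkeeping around the cloud fallback, since a service offloaded to the cloud is, from the matching-theory standpoint, "unmatched" on the edge side; I would handle this cleanly by declaring the cloud to be a capacity-infinite dummy node sitting at the bottom of every $\succ_s'$ list, which makes the DAA-stability argument above apply uniformly and preserves the one-to-many structure without additional case analysis.
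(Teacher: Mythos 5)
Your proposal is correct and follows essentially the same route as the paper: a reductio ad absurdum on the existence of a blocking pair, resolved by the deferred-acceptance logic that a service preferring edge $e$ must have already proposed to $e$ and been rejected in favor of services $e$ prefers. Your version is in fact more careful than the paper's own terse argument --- the explicit DAA monotonicity of each edge's held set and the treatment of the cloud as a capacity-infinite, least-preferred outside option are exactly the details the paper leaves implicit --- while your third paragraph on exchange stability is not needed for one-to-many blocking-pair stability and does not appear in the paper's proof.
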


\begin{proof}[Proof]
	Before proving the stability matching of the proposed algorithm, let us start with some definitions: 
	
	\begin{definition}[Blocking Pair]
		a service $s$ and an edge server $e$ form a blocking pair if both prefer the others than their currently matched results.
	\end{definition}
	
	\begin{definition}[Stable Matching]
		a one-to-many matching game $\mu$ is defined as stable game if it is not blocked by any pair.
	\end{definition}
	
	Assuming that the outcome of the matching game $\mu(s) = e'$, but $e$ and $s$ form a blocking pair, which means that both $e$ and $s$ prefer to be matched each other, but they have not been matched. Thus, we have $\mu(s) = e$, $e \succ_s e'$. However, according to Algorithm~\ref{alg:sharing_alg}, $\mu(s) = e$ is not the matching result, which means that the edge $e$ has abandoned the service $s$ during the matching process. Furthermore, the final match for $s$ is $e'$, that is $e' \succ_s e$. This results in contradicting our initial assumption, and hence we can conclude that the proposed computation offloading algorithm achieves a stable matching.
\end{proof}

% ------------------------------
\section{Evaluation Performance}
\label{sec:evaluation}
This section presents the evaluation performance of the proposed schemes. We will start by describing the simulation setup, evaluated metrics, and then discussing the obtained results.

\subsection{Simulation Setup}
The proposed schemes have been implemented using the Python programming language. The implementation includes computation offloading, computation reuse, and out-edge reuse sharing. We have evaluated \sol in comparison with:

\begin{itemize}
	\item \textit{Cloud}: all tasks are forwarded to the cloud for remote execution. The cloud has immense computation resources but is located far away from users.
	
	\item \textit{Native Edge}: services are offloaded to the edge server randomly. The edge server provides only computation without applying the computation reuse concept.
	
	{\boubakr \item \textit{Genetic Algorithm (GA)}: the service provider searches for a good set of services to offload aiming at minimizing the delay. The initial population is a set of services with the highest frequency, while the fitness score is calculated based on the service invocation number.}
	
	{\boubakr \item \textit{Simulated Annealing (SA)}: the service provider selects randomly a service and offloads it to the edge server. If the selected service improves the overall delay, then it is accepted.  Otherwise, the service provider makes the selection with some probability less than 1. The probability decreases exponentially with the ineffectiveness of the service selection.}
	
	\item \textit{Greedy Algorithm}: services are offloaded to the edge server using a greedy algorithm. By saying greedy, we mean that the service provider tends to fully utilize the available resources at the edge server. % In this scheme, the computation reuse concept is not applied.
	
	\item \textit{Matching Algorithm}: service provider matches, using matching theory strategy, the existing services with available resources in order to decide the service offloading process. The decision is based on the preference lists without including the service reusability. The edge server provides only computation without utilizing the computation reuse concept.
	
	% \item \textit{Greedy with Reuse}: this scheme is similar to greedy offloading. The only difference is that the computation reuse is applied to the offloaded services.
\end{itemize}
{\boubakr To further demonstrate the effectiveness of computation reuse, we implemented each offloading scheme with and without reuse feature.} The evaluation has been carried on an Intel Core i7-1075 CPU system clocked at 2.6 GHz, with 16 GB of RAM, running Windows 10 Education 64-bit. Table~\ref{tab:parameters} provides a summary of the used parameters.

\begin{table}[!t]
	\centering
	\makegapedcells
	\caption{Experimental parameters.}
	\label{tab:parameters}
	\begin{tabular}{l l}
		\toprule
		\textit{Parameter} &
		\textit{Value} \\
		\midrule
		Hops to the edge   & $1$ \\
		Hops to cloud      & $[5-10]$ \\
		Edge capacity      & $6-10$ services simultaneously \\
		Number of tasks    & $[1000-10000]$ \\ 
		Tasks arrival rate & $[10-50]$ tasks per seconds \\
		Input redundancy   & $[10-80]\%$ \\
		Replacement policy & $LFU$ \\
		Dataset            & Alibaba Cluster Dataset \\
		Number of trials   & $10$ \\
		\bottomrule
	\end{tabular}
\end{table}

The evaluation has been conducted using Alibaba Cluster Dataset~\cite{alibaba}. We have used the tasks' batch that has up to $14295730$ tasks. The computation workload has been implemented to match the tasks in the dataset. These tasks are grouped into 12 distinct services. Table~\ref{tab:frequency_redundancy} shows the frequency and redundancy of each type of service. We have also implemented Least Frequency Used (LFU) policy in order to clean the Reuse Table by removing the entries that have a smaller frequency rate and keep room for popular inputs.

\begin{table}[!b]
	\centering
	\makegapedcells
	\caption{Service's Redundancy.}
	\label{tab:frequency_redundancy}
	\begin{tabular}{l l c}
		\toprule
		\textit{Service} &
		\textit{Frequency} &
		\textit{Redundancy (\%)} \\
		\midrule
		
		S1 & 12		& 0.00008 \\
		S2 & 488	& 0.00341 \\
		S3 & 5519	& 0.03861 \\
		S4 & 6543	& 0.04577 \\
		S5 & 8889	& 0.06218 \\
		S6 & 28777	& 0.20130 \\
		S7 & 35061	& 0.24526 \\
		S8 & 53933	& 0.37727 \\
		S9 & 322929	& 2.25892 \\
		S10 & 399489	& 2.79446 \\
		S11 & 1226388	& 8.57870 \\
		S12 & 12207702	& 85.39404 \\
		
		\bottomrule
	\end{tabular}
\end{table}

\subsection{Evaluated Metrics}
We present the \nth{90} percentile of the results collected after running 10 trials. Hereafter, we measured the following metrics:

\begin{itemize}
	\item \textit{Task completion time}: refers to the sum of communication and computation time to execute a task. It is calculated as the sum of the time elapsed between the generation of a task by the end-user, the execution of the task (either at the edge or the cloud server), and the reception of the results by the requested end-user.
	
	\item \textit{Task computation time}: refers to the time elapsed between the reception of the task by the edge or cloud server and the production of the result. This metric indicates the computation time (\eg computation from scratch or computation with reuse) and the waiting time before execution.
	
	\item \textit{Resource utilization}: refers to the claimed capacity of the resources utilized by the server to execute the received tasks.
	
	\item \textit{Computation load}: refers to the amount of computation executed at the cloud, at the edge, and at the edge after performing computation reuse to satisfy received tasks.
	
	\item \textit{Reduction rate:} refers to the amount of reduced communication and computation while using computation reuse at the edge. This metric indicates the gain earned with \sol and \esol schemes.
\end{itemize}

\subsection{Experimental Results}
Fig.~\ref{fig:completion_time} shows the average completion time when the number of tasks increased. The results indicate that completing the execution of tasks at the cloud has the largest delay since transmitting large size of inputs data requires more time especially when the core network is congested. Indeed, offloading the computation at the edge may provide better results, but only when selecting the suitable service to offload. {\boubakr For instance, random computation offloading at the edge exceeds the large time produced by the cloud. This is due to the randomness of decision-making. On the other hand, greedy, simulated annealing, genetic algorithm, and matching schemes without reuse perform better than the cloud since the offloading decisions are well-rounded, but with better strategies that address that consider the server capacity and distribution of tasks. We also notice that simulated annealing is better than greedy since it uses a probabilistic technique to approximate a global optimum. Similarly, the matching theory is better than the genetic algorithm due to the scalability of matching services with edge servers without performing thermodynamics principles. The same behavior occurs when computation reuse is applied. Indeed, applying computation reuse at the edge grants better results since more tasks gain less computation. \sol outperforms all schemes because of the optimal offloading decision that takes into account the reusability of computation. Moreover, we can notice that \sol exemplifies an inverse correlation, where the number of tasks increases the completion time decreases. This is due to the redundancy in input data that leads to more computation reuse and less computation overload.}

\begin{figure}[!t]
	\centering
	\includegraphics[width=.9\linewidth]{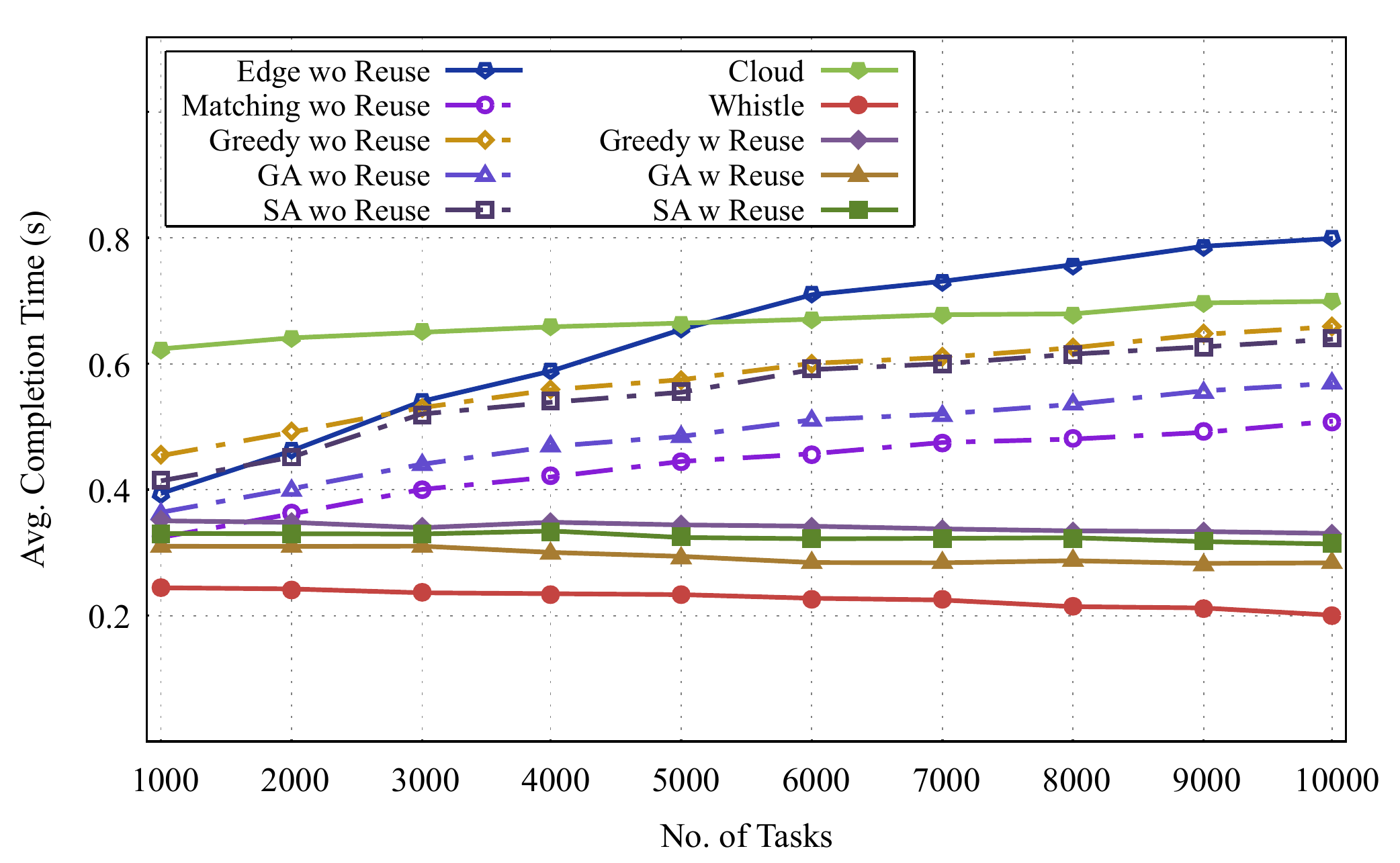}
	\caption{Tasks completion time.}
	\label{fig:completion_time}
\end{figure}

Fig.~\ref{fig:computation_time} presents the average computation time when the number of tasks increases. The results show that random offloading at the edge has the worse output since most of the offloaded services have fewer invocations and hence tasks need to be bounced back after a long waiting time for remove execution at the cloud. {\boubakr Similarly, greedy, simulated annealing, genetic algorithm, and matching schemes without reuse have better results compared with random offloading, yet the computation witnesses more waiting time to satisfy all received tasks with limited computation resources. On the other hand, cloud computing had better results when considering a small number of tasks, but when the number of tasks increases, and even though with unbounded resources, the cloud does not outperform \sol. The latter benefits from computation reuse and performs only the lookup process at Reuse Table instead of computation from scratch. A lookup operation has a very light load compared with the task's execution. We can also notice that when the number of tasks increases, the average computation time decreases due to the fact that the received tasks have more redundancy in their inputs that lead to more computation reuse.}

\begin{figure}[!t]
	\centering
	\includegraphics[width=.9\linewidth]{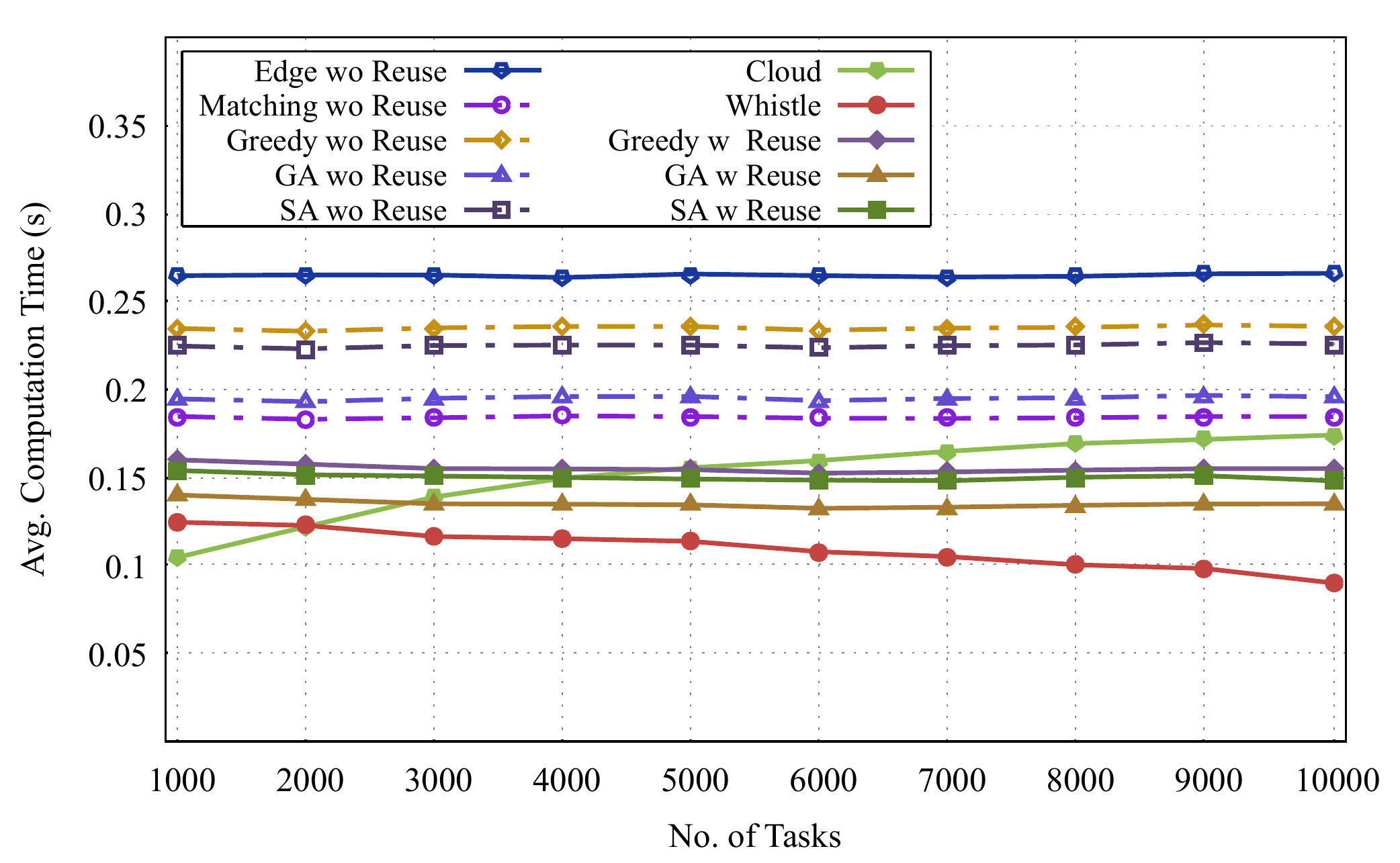}
	\caption{Task's Computation Time.}
	\label{fig:computation_time}
\end{figure}

Fig.~\ref{fig:resource_utilization} shows the percentage of resource utilization when executing tasks. Here, we fix the number of tasks to 10000 and for each trial, we increase the capacity of the edge server and observe the resource utilization. {\boubakr The results demonstrate that offloading schemes without reuse utilize the entire resources since more computation is required to satisfy the received tasks. When computation reuse is applied, we can notice that fewer resources (with variations based on the offloading scheme) are utilized since the reusability of services has not been taken into consideration. On the other hand, \sol utilizes only 16\% of resources to satisfy all received tasks. This is due to the fact that lookup operations on tables are not costly compared with a full task's execution.}

\begin{figure}[!t]
	\centering
	\includegraphics[width=.9\linewidth]{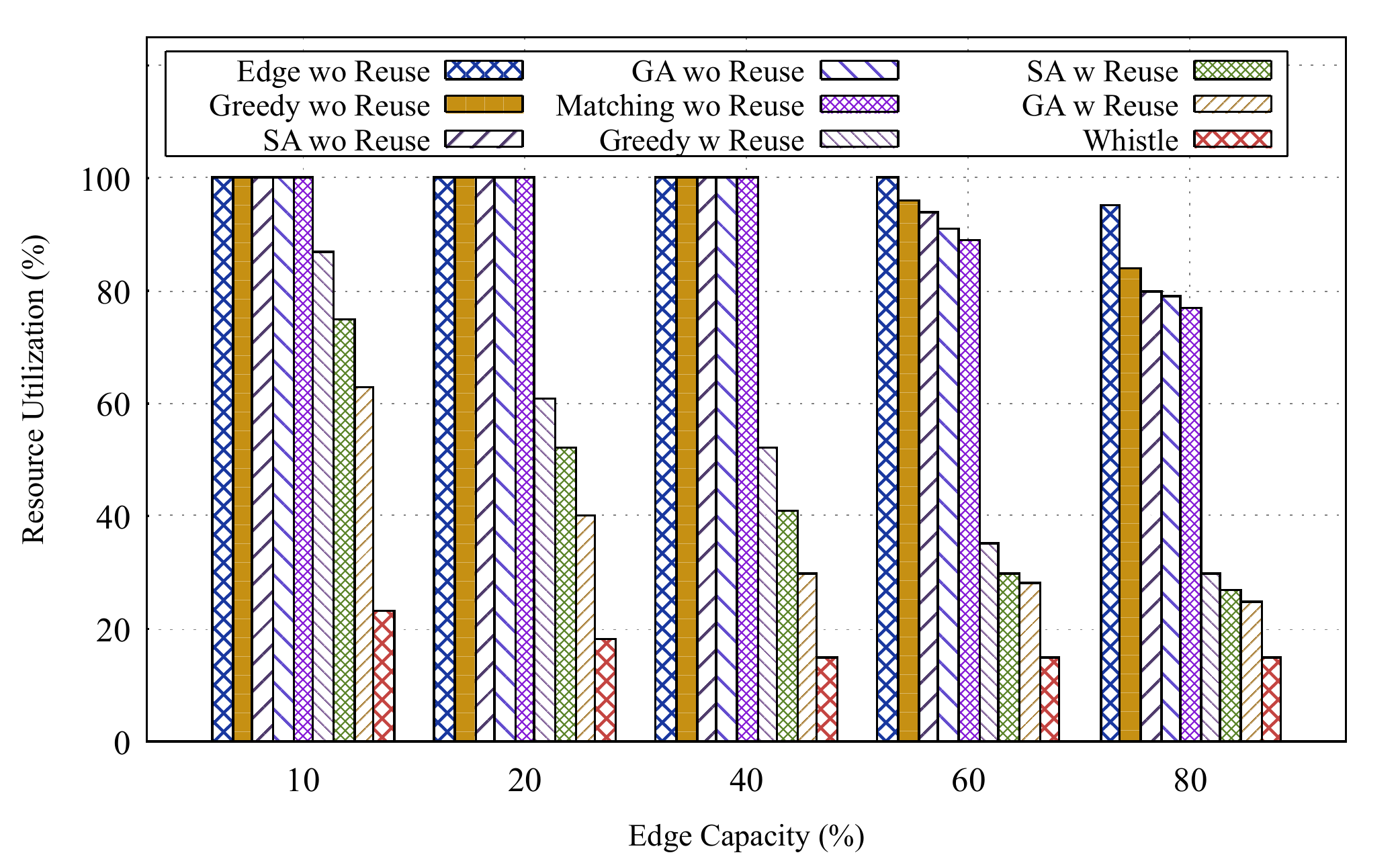}
	\caption{Resource utilization.}
	\label{fig:resource_utilization}
\end{figure}

Fig.~\ref{fig:computation_load} proves afore-obtained results and depicts the normalized load. We can notice that in the offloading schemes without computation reuse, a small ratio of computation is executed at the edge while more computation is executed at the cloud. This is due to (i) not all the appropriate services are offloaded to the edge and/or (ii) the offloaded services could not satisfy a large number of received tasks with the available resources. {\boubakr  However, when we increase the edge capacity, less computation is bounced back to the cloud since the edge server can accommodate more services. We also notice that by applying computation reuse, the results are improved compared with conventional computation. Finally, \sol efficiently utilizes the edge resources by offloading the most invoked services with high reusability at the edge.} Thus, most of the computation (almost 99\%) is executed at the edge server. Combined with Fig.~\ref{fig:resource_utilization}'s outputs, we can conclude that \sol efficiently and effectively utilizes the edge resources.

\begin{figure}[!t]
	\centering
	\includegraphics[width=.9\linewidth]{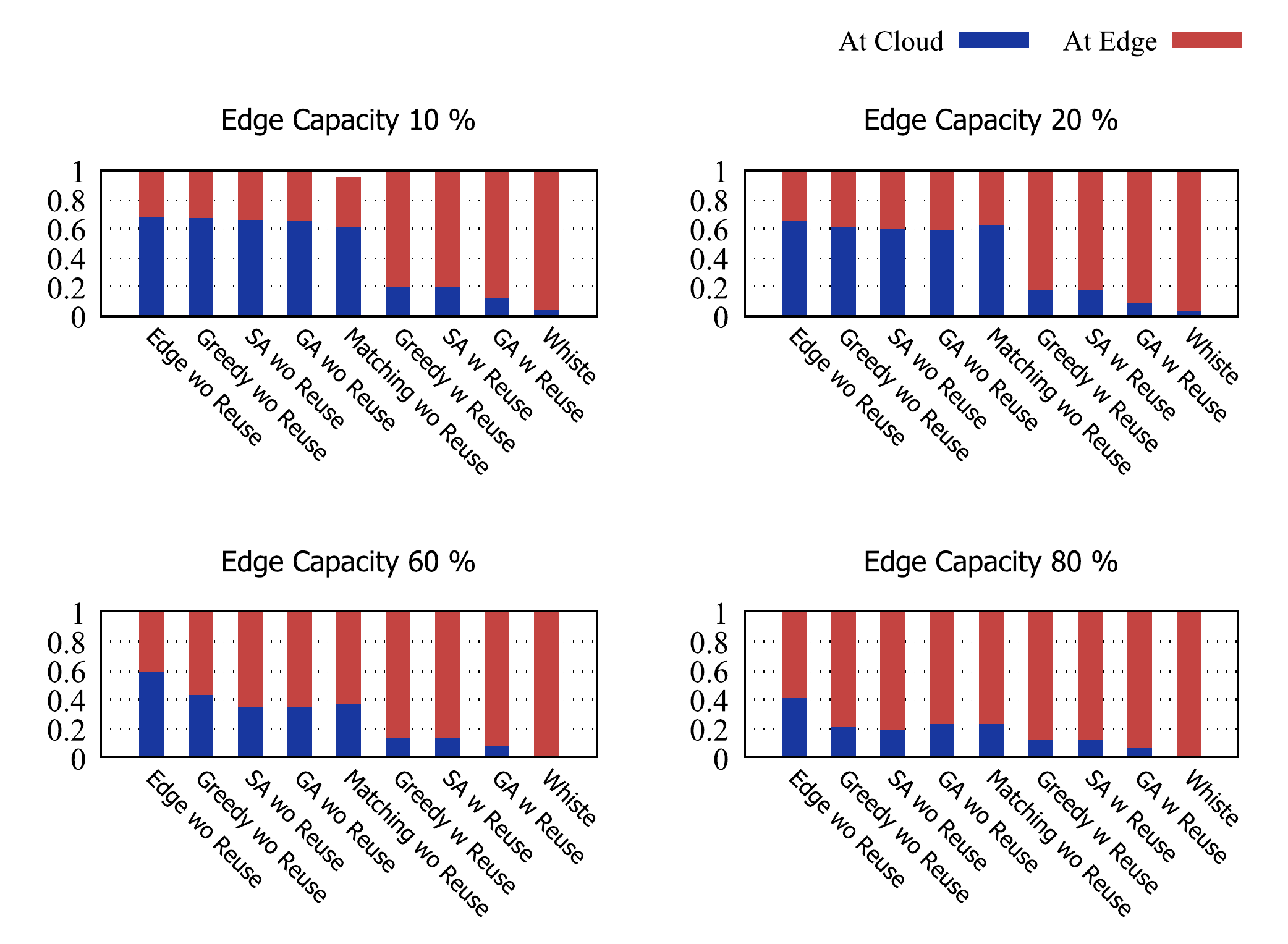}
	\caption{Computation load.}
	\label{fig:computation_load}
\end{figure}

Fig.~\ref{fig:reduction_rate} shows the reduction rate, for both communication and computation, when the number of tasks increased. The results show that with \sol, we could reduce the communication up to 71\% since it restrains the number of traversed data in the core network. Similarly, \sol reduces the computation (workload) up to 77\% since it eliminates redundant computation and uses only lightweight lookup operations instead.

\begin{figure}[!t]
	\centering
	\includegraphics[width=.9\linewidth]{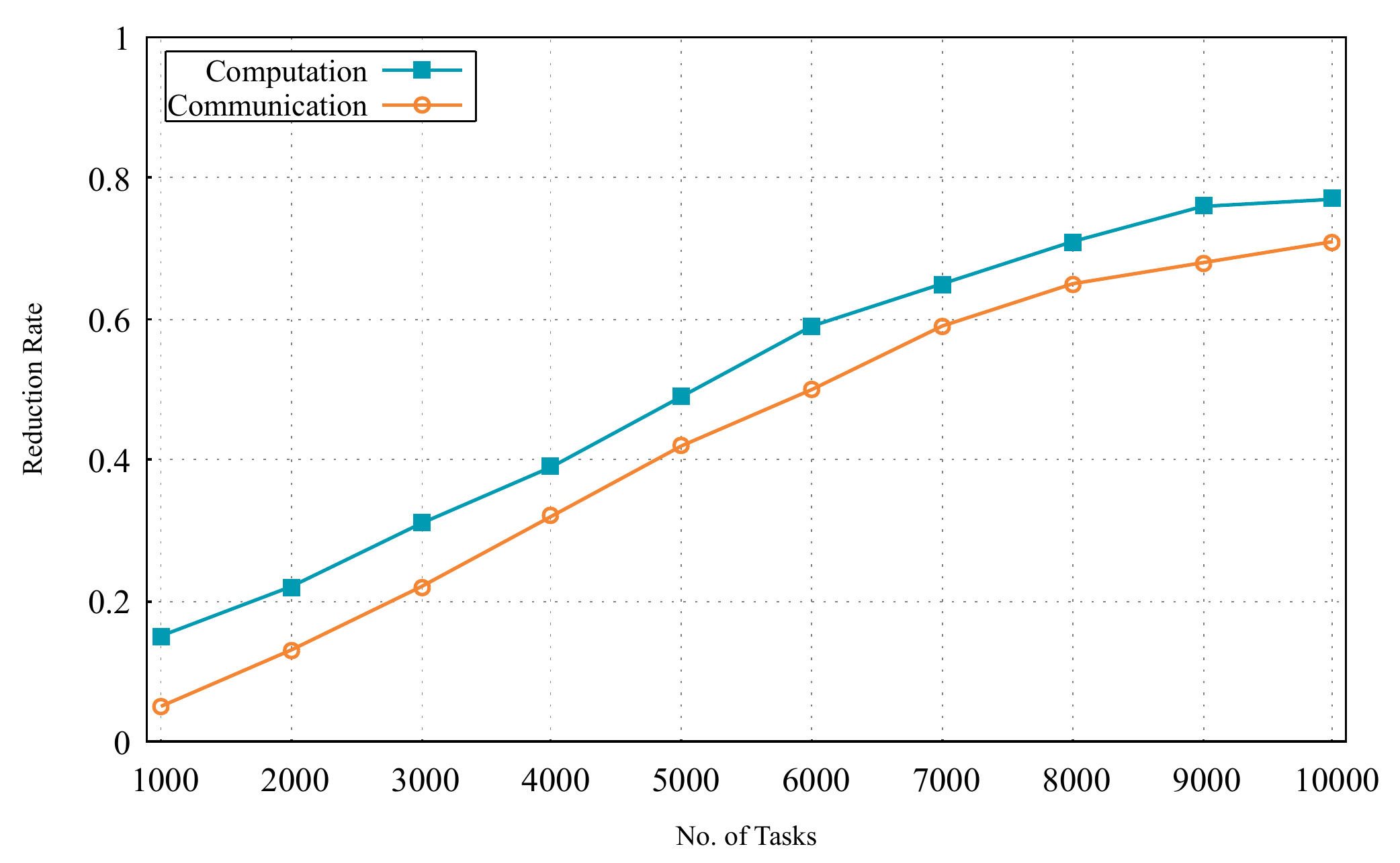}
	\caption{Reduction rate.}
	\label{fig:reduction_rate}
\end{figure}

Although \sol achieves better results compared to other schemes, some tasks are bounced back to the cloud for remote computation. Their associated services are not offloaded to the local edge server. Fig.~\ref{fig:completion_time_ext} evaluates the performance of \esol. From the results, we can see that, with \esol, and just by forwarding the computation to the near edge server, we could improve the completion time by 13\%, since bouncing the computation to the cloud consumes more times compared with forwarding it to the near edge server.

\begin{figure}[!t]
	\centering
	\includegraphics[width=.9\linewidth]{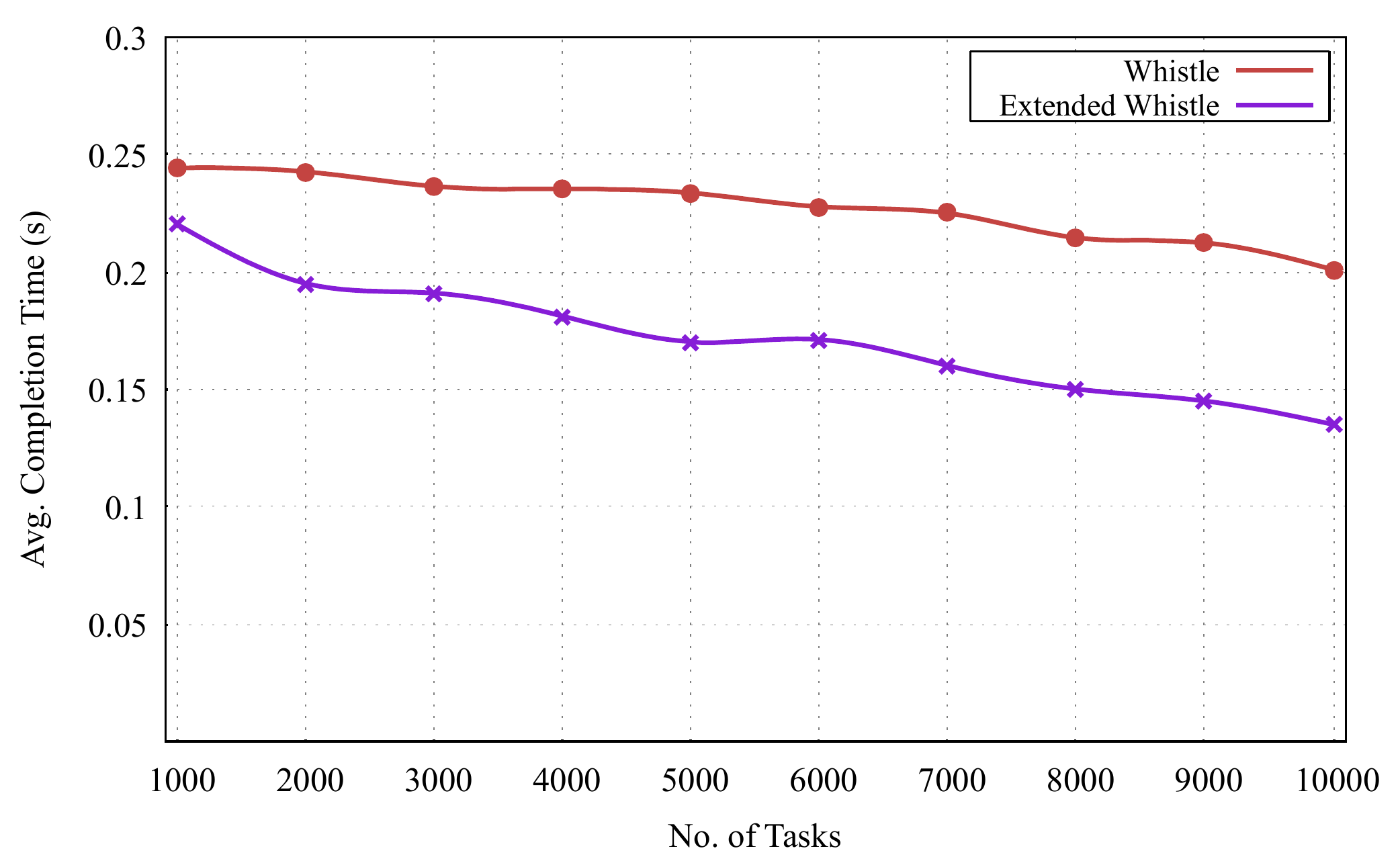}
	\caption{Task's completion time.}
	\label{fig:completion_time_ext}
\end{figure}

% --------------------
\section{Related Work}
\label{sec:related_work}
Edge computing aims at moving computation and data analysis from the cloud to edge devices, which will increase the system's overall analytic capacity. In fact, edge computing has shown a great increase in capacity and resiliency, gain better security and privacy protection, and leverage the 5G network's reduced latency.
Computation offloading is one of the active research in edge computing. For instance, 
Xu~\etal \cite{xu2018joint} focused on computation offloading from end-users to edge servers. The edge server caches different services including their libraries so that the offloaded computation can be executed. The authors designed an online algorithm that jointly optimizes dynamic services caching and task offloading in dense cellular networks.
Chen~\etal \cite{chen2017joint} designed a three-step algorithm to jointly manage computation and communication resources aiming to minimize the overall energy, computation, and delay. The designed algorithm comprises a semi-definite relaxation, alternating optimization, and sequential tuning.
Sundar~\etal \cite{sundar2018offloading} focused on offloading dependent tasks with communication delay and deadline constraints. The authors designed a heuristic-greedy allocation scheme that tends to allocate a completion deadline for each task and greedily optimizes the scheduling of each task subject to its time allowance.
Zhang~\etal \cite{zhang2020dynamic} investigated the task offloading and resource allocation problem in ultra-dense networks. The authors designed an online task offloading and resource allocation scheme based on Lyapunov optimization using decomposition methods, matching games, and geometric programming.

With the massive increase in connected devices and the nature of applications, various attempts have been proposed to further enhance the performance of edge computing. Computation reuse is one of the pillar mechanisms used in compute-less networking. Computation reuse aims at caching the inputs and output of popular computation and use it to fulfill the newly coming tasks that have similar input data. By doing so, a large number of redundant computations will be eliminated and hence both computation resources and end-to-end delay will be improved.
{\boubakr Al Azad~\etal~\cite{azad2021promise} discussed the promise and challenges to achieve pervasive computation deduplication and reuse at the edge of the network. The concept of computation deduplication aims at inferring whether computation reuse is possible for the arrived task compared with the already executed/stored tasks.}
Drolia~\etal \cite{drolia2017cachier} designed an image recognition system to meet the latency-sensitive nature of the application. The designed system uses a web cache to store different computation at the edge and use it to recognize similar inputs. The authors also balanced computation between edge and cloud so that they achieve larger latency.
Lee~\etal~\cite{lee2019case} presented an empirical study for using computation reuse at edge networks. The authors motivated their studies with different use cases and scenarios where users send redundant tasks (\ie redundant input data). Through this study, the authors showed computation reuse at the edge has the potential to reduce resource utilization and lessen the time needed for the completion.
Guo~\etal~\cite{guo2018potluck, guo2018foggycache} designed a compute-less architecture that eliminates redundant architecture by performing approximate deduplication. The designed architecture uses locality-sensitive hashing to identify similar input data, store the computation, and then share the processing results between applications. Although the architecture maximizes deduplication opportunities, cross-sharing the computation between applications requires further investigation in security and privacy.
Other efforts have been proposed to merge next-generation Internet architecture, such as Information-Centric Networking (ICN) or Named Data Networking at the edge networking by leveraging computation reuse. Mastorakis~\etal \cite{mastorakis2020ICedge} studied the merger of edge computing with ICN by using naming abstraction. The authors proposed to use computation reuse at the edge to satisfy tasks at the edge with the minimum cost.

Game theory is an important tool for decision-making in communication networks~\cite{moura2018game}. Various attempts have been presented in the literature that uses game theory for optimal assignment tasks. For instance,
{\boubakr Zhang~\etal~\cite{zhang2020dynamic} studied task computation offloading in mobile edge networks. The authors used matching theory for sub-channel allocation in order to optimize the efficiency of network energy and service delay.}
Amine~\etal~\cite{amine2017many} used matching theory to address the energy issues in mobile cellular ultra-dense networks. The authors modeled the co-channel deployment problem as a many-to-many matching game and then introduced a many-to-many stable matching scheme that assigns each macro-indoor user to the most suitable base station.
Filali~\etal~\cite{filali2018sdn} used a one-to-many matching game to assign each switch to SDN controllers with a minimum quota. The quota represents the utilization of the processing capacity of the controller.
Liu~\etal~\cite{liu2018computation} focused on multi-user computation offloading in vehicular networks. The problem is modeled as a game theory and a distributed algorithm is designed to provide an efficient computation offloading.
Similarly, Messous~\etal~\cite{messous2019game} studied computation offloading in unmanned aerial vehicle networks and designed a non-cooperative game that tends to optimally provide computation locally, at the edge server, or powerful edge server so that achieve better performance.

% ------------------
\section{Conclusion}
\label{sec:conclusion}
In this paper, we designed \sol, a compute-less driven network that adopts service offloading and computation reuses concepts. \sol is a many-to-many matching game that enables service providers to offload their most invoked and highly reusable services to the edge servers so that meet the desired quality of service requirements. We further extended \sol to provide a one-to-many matching and allow computation sharing among near edge servers, and hence eliminate bouncing the computation back to the cloud. The designed schemes have been extensively evaluated using real-world datasets. We were able to achieve the best performance in terms of task completion time, resource utilization, computation load, and computation/communication reduction against various offloading schemes due to the efficiency of matching strategy and computation caching/reuse. We also provided theoretical analyses to prove the stability of the designed strategies.

\section*{Acknowledgments}
The authors would like to thank the Natural Sciences and Engineering Research Council of Canada (NSERC) for the financial support of this research.

\bibliographystyle{IEEEtran}
% \bibliography{Ref}

\end{document}